\documentclass[9pt,technote]{IEEEtran}
\usepackage{graphicx,amssymb,arydshln,amsmath,amssymb,amsfonts}

\newcommand{\cZ}{{\cal Z}}
\newcommand{\Ht}{{\sf H^2}}
\newcommand{\sss}{\scriptscriptstyle}
\newcommand{\scs}{\scriptstyle}
\newcommand{\sN}{{\scriptstyle N}}
\newcommand{\Htpq}{{\sf H^2_{\raisebox{.1em}[0em]{$\sss p\times q$}}}}
\newcommand{\speq}{\,=\,}
\newcommand{\req}[1]{(\ref{#1.eq})} 
\newcommand{\mycaption}[1]{\caption{\footnotesize #1}}
\newcommand{\R}{\mathbb{R}}
\newcommand{\sm}{\text{-}}

\newtheorem{theorem}{Theorem}
\newtheorem{lemma}[theorem]{Lemma}

\begin{document}

\title{On the connection between input-output resonances and internal
    modes of linear time-invariant systems}

    \author{Bassam Bamieh\thanks{
    Department of Mechanical Engineering,
    University of California at Santa Barbara,
    Santa Barbara, CA 93106.
    {\em bamieh@ucsb.edu}}}

\maketitle

\begin{abstract}
It is shown  that in general, there is no connection between the
location of the internal modes of a Linear Time-Invariant (LTI)
system and the shape of its input-output frequency response. In
particular, it is shown  that resonance peaks of the frequency response
do not necessarily correspond to under-damped internal modes. 
This phenomenon, though rare, can occur in high (or
infinite) dimensional LTI systems. In the Single Input Single
Output (SISO) case, this phenomenon can be attributed to the
location of system zeros, while in certain Multi Input Multi
Output (MIMO) cases without system zeros, it can be attributed to
the non-normality of the matrix generating the internal dynamics.
\end{abstract}

\section{Introduction}

In input-output analysis of Linear Time Invariant (LTI) systems,
the frequency response plays a dominant role. The peaks of the
frequency response indicate frequencies that are most ``resonant''
by the system's dynamics, i.e. inputs at that frequency are most
amplified in the system's output. On the other hand, for systems
with or without inputs, the internal modes of the system are
usually thought of as representing the ``natural modes'' of the
dynamics.

More precisely, consider  the finite dimensional stable LTI
system be given by the following minimal realization
\begin{align*} 
	 \dot{x} & \speq   Ax +  B u    \\
                 y   & \speq   Cx,
 \end{align*}
where $A$ has all its eigenvalues in the open Left Half of the complex
Plane (LHP).  As is well known,
the frequency response of this system is given by
    \[
        G(j\omega) ~=~ C(j\omega I-A)^{-1} B.
    \]
When the matrix $A$ has eigenvalues with small negative real
part (commonly referred to as \textit{under-damped modes}), one
expects that the corresponding frequency response has peaks at
frequencies near (the imaginary parts of) these internal modes.
The reasoning for this is due to the fact that at the eigenvalues
of $A$, the transfer function $G(s)$ has infinite magnitude. These
observations are commonly stated in classical vibrations and controls 
textbooks~\cite{ogata2010modern,Tongue2001}.
For Single Input Single Output systems (SISO) systems, it is easy
to see that this picture has to be modified due to the presence of
transfer function zeros which may modify the effect of nearby
poles. Figure~\ref{rubber_sheet.fig} gives an illustration. 

A further modification to the  simple picture in Figure~\ref{rubber_sheet.fig}
 occurs if a
certain under-damped internal mode is weakly controllable and/or
observable from inputs and outputs respectively. In this case, it
may not effect the frequency response greatly.

For Multi Input Multi Output (MIMO) systems, the relations are  
more complicated than those described above. However, certain 
cases can be easily understood. Consider the case of a MIMO 
transfer function of the form $G(s) = C(sI-A)^{-1} B$ where $C,B$ are square
unitary matrices, and $A$ is a normal matrix (i.e. it has mutually orthogonal 
eigenvectors). For MIMO systems the frequency response is
quantified by the singular value plot 
\begin{multline} 
	\sigma_{\max} \left(  C (j\omega I -A)^{-1} B \right) 
	=^1 \sigma_{\max} \left(   (j\omega I -A)^{-1}  \right) 		\\
	= \frac{1}{ \sigma_{\min}  (j\omega I -A) }    
	=^2 \frac{1}{\min_\ell   ~| j\omega  -\lambda_\ell(A)|}  , 	\label{norm_dist.eq}
\end{multline} 
where $=^1$ follows from $C,B$ being unitary, $\lambda_i(A)$ is the 
$i$'th eigenvalue of $A$, and $=^2$ follows from $A$ being 
normal. The quantity~\req{norm_dist} has a simple geometric 
interpretation since $|j\omega - \lambda_\ell(A) |$  
 is the distance in the complex plane between $j\omega$ and $\lambda_\ell(A)$. Thus 
the singular value plot at $j\omega$  is inversely proportional to the distance 
between $j\omega$ and the closest eigenvalue of $A$ as 
illustrated in Figure~\ref{MIMO_Norm.fig}. This indicates again that 
poles close to the imaginary axis induce resonance peaks at close by 
frequencies  for such 
transfer functions.

\begin{figure}[t]
	\centering
	\includegraphics[width=.35\textwidth]{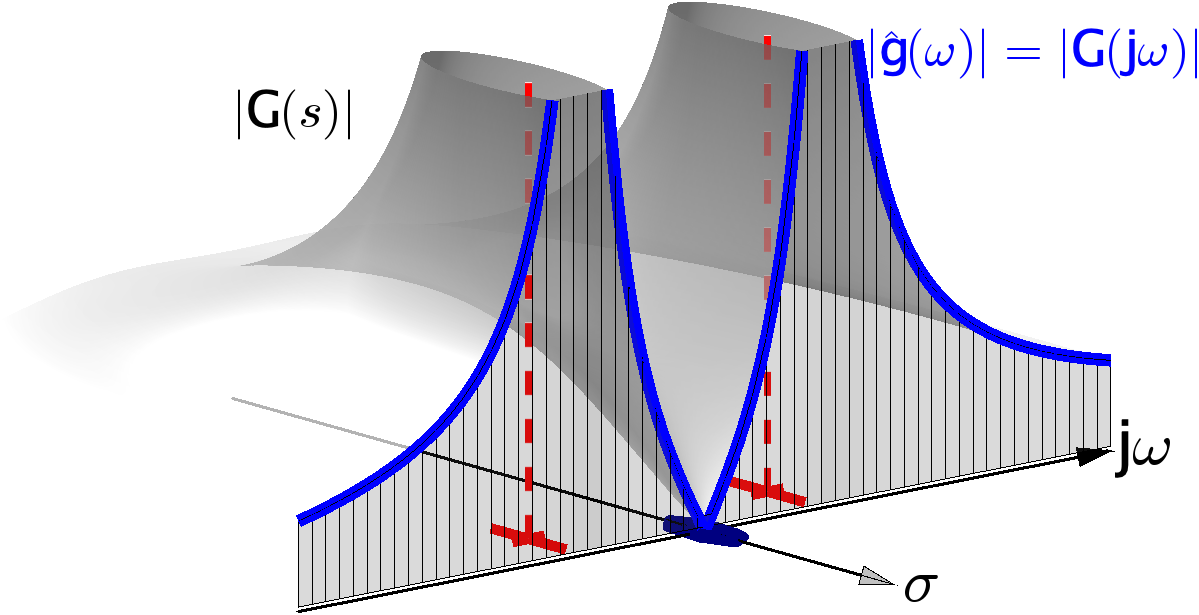} 
	
	\mycaption{Typical relation between the frequency response of a SISO LTI system and its 
		pole locations. Underdamped poles correspond to resonances which 
		are frequency response  peaks. 
		} 
  \label{rubber_sheet.fig}
\end{figure}

\begin{figure}[t]
	\centering
	\includegraphics[width=.3\textwidth]{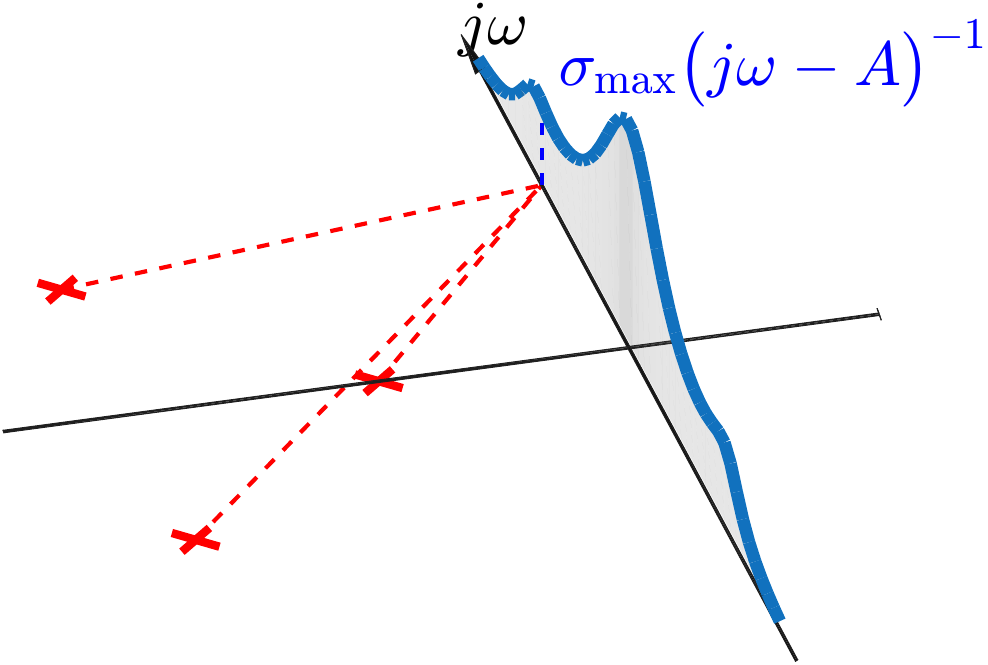} 
	
	\mycaption{For the MIMO transfer function $(sI-A)^{-1}$ where $A$ is normal, 
		the singular value plot at $j\omega$ is inversely proportional to the distance between 
		$j\omega$ and the closest pole location. 
		} 
  \label{MIMO_Norm.fig}
\end{figure}

The argument in~\req{norm_dist} fails if $A$ is nonnormal. This indicates that for such 
systems, the relations between underdamped modes and resonances 
may not be as simple. Thus it may be possible that a lightly damped 
mode does not correspond to a frequency response peak. It is perhaps even 
more surprising that a resonance peak can appear in the frequency response 
curve without there being a lightly damped mode nearby. 

In this paper, we show that in general, there need not be any
connection between the locations of internal modes in the LHP and
the peaks of the corresponding frequency response. More precisely,
we show that given any a priori specified pole locations in the
LHP, one can approximate any frequency response (of a transfer
function in the Hardy space $\Ht$) arbitrarily closely with a
transfer function having its poles at the pre-specified locations.
%We show this using several constructions that demonstrate several
%possible reasons for this phenomenon. First, we present a SISO
%construction in which the lack of correspondence between internal
%modes and input-output resonances can be attributed to the action
%of system zeros. Then we give a construction of a MIMO system
%without transmission zeros, in which this lack of correspondence
%can be attributed to the non-normality (non-orthogonality of
%eigenvectors) of the generator $A$. 
We show this using a construction that typically requires high-order 
systems. Thus this unusual lack of correspondence between internal 
modes and external resonance is to be understood as a high-dimensional 
phenomenon. 
%systems are necessary for these constructions, and we present
%cases in which systems have either repeated or non-repeated poles.

Though this lack of correspondence between under-damped internal
modes and input-output resonances is atypical, it does occur in
some important systems. A notable such case is input-output 
analysis of wall-bounded shear flow transition to turbulence~\cite{jovanovic2005componentwise}, 
in which a long-standing anomaly in hydrodynamic stability analysis 
 can be attributed to this lack of
correspondence.

\section{Main Result} 

\subsection{Mathematical Preliminaries}

We carry out the approximation for transfer functions in the Hardy
space $\Ht$, which is the subspace of $\sf L^2(j\R)$ (on the imaginary
axis) of functions analytic in the right half of the complex
plane. Another characterization of $\Ht$ is as the closure (in the 
$\sf L^2 (j\R)$ norm) of stable rational 
transfer functions. Thus, finite dimensional stable transfer functions
belong to this class. To carry out MIMO analysis, we allow those
transfer function to be matrix valued. We will use the notation
$\Htpq$ to refer to the $\Ht$ space of  $p\times q$ 
matrix-valued functions.

\subsection{Main Result}

\begin{lemma}\em							\label{main.lemma}
Let $G(j\omega)$ be the restriction to the imaginary axis of any function 
in $\Htpq$. Let $n\geq 1$ be any integer, 
and let $\cZ :=\{z_1,\ldots, z_n\}$ be any  set of points  in the LHP. 
Given any $\epsilon>0$, there exists integers $\{\sN_1,\ldots,\sN_n\}$, and a set of 
$p\times q$ matrices $\{ H_{\ell k} \}$  such that
$G$ can be approximated (in the $\Ht$ norm) to within $\epsilon$ by a transfer
function of the form
\[
	\sum_{\ell=1}^n \sum_{k=1}^{N_\ell} \frac{1}{(s-z_\ell)^{k}} ~H_{\ell k} .
\]
\end{lemma}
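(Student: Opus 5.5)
The plan is to reduce to the scalar case with a single pole location, and to prove density by an orthogonal-complement argument using the reproducing kernel of $\Ht$. Since the $\Htpq$ norm squared is the sum of the squared $\Ht$ norms of the entries, it suffices to approximate each scalar entry $G_{ij}$ to within $\epsilon/\sqrt{pq}$ by a function of the stated form. The matrices $H_{\ell k}$ are then assembled entrywise, taking $N_\ell$ to be the maximum over the entries and padding with zero coefficients. Moreover, it is enough to use only one point, say $z_1$, and to set the coefficients for the remaining $z_\ell$ to zero. So the core claim is the following: for a fixed $z$ in the LHP, the linear span of $\{(s-z)^{-k}\}_{k\geq 1}$ is dense in $\Ht$.

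Note first that $(s-z)^{-k}\in\Ht$ for every $k\geq1$, since it is analytic in $\mathrm{Re}\, s > \mathrm{Re}\, z$ and square integrable on $j\R$. Since $\Ht$ is a Hilbert space, density of the span is equivalent to showing that any $f\in\Ht$ orthogonal to every $(s-z)^{-k}$ must vanish. Here I use the reproducing kernel of $\Ht$ on the right half plane: for $w$ with $\mathrm{Re}\, w>0$, set
\[
K_w(s)=\frac{1}{2\pi\,(s+\bar w)}, \qquad \langle f, K_w\rangle = \frac{1}{2\pi}\int_{-\infty}^{\infty} f(j\omega)\,\overline{K_w(j\omega)}\,d\omega = f(w),
\]
which follows from the Cauchy integral formula on the right half plane together with the decay of $\Ht$ functions. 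Take $w:=-\bar z$, which lies in the RHP. Then $(s-z)^{-1} = 2\pi K_w(s)$. More generally, $(s-z)^{-k}$ is, up to a nonzero constant, the $(k-1)$-st derivative of $K_w$ with respect to $\bar w$. Differentiating the reproducing identity $k-1$ times in $w$ therefore gives
\[
\langle f, (s-z)^{-k}\rangle = c_k\, \overline{\,\cdots\,}\;\; \text{with } \; \langle f, (s-z)^{-k}\rangle \propto f^{(k-1)}(w), \qquad c_k\neq 0,
\]
where the factorial and sign constants are tracked but play no role.

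Orthogonality then says that $f^{(m)}(w)=0$ for all $m\geq 0$. Because $f$ is analytic on the connected right half plane, $f\equiv 0$ there, and hence $f=0$ in $\Ht$ through its boundary values. Thus the orthogonal complement of the span is trivial, and the span is dense. Density yields a finite combination $\sum_{k=1}^{N} h_k (s-z)^{-k}$ within the desired tolerance of each scalar entry, which completes the scalar case; the matrix case follows by the entrywise assembly described above.

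The step I expect to need the most care is justifying the differentiation of the reproducing identity under the integral sign, i.e.\ interchanging $\partial_{\bar w}^{\,k-1}$ with the $\sf L^2(j\R)$ inner product. I would handle it by dominated convergence. For $w$ ranging over a small disk around $-\bar z$ inside the RHP, the difference quotients of $(j\omega+\bar w)^{-1}$ are bounded uniformly by $C/(1+\omega^2)^{1/2}$, which is square integrable. Combined with $f\in\sf L^2(j\R)$ and Cauchy--Schwarz, this gives an integrable dominating function. Alternatively, I can avoid the interchange altogether by noting that $w\mapsto \langle f,K_w\rangle$ is analytic and equal to $f(w)$, and by computing its Taylor coefficients at $w$ directly from the geometric expansion of $K_w$ in $(w'-w)$, which converges in $\Ht$ norm for $|w'-w|<\mathrm{Re}\, w$.
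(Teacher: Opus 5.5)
Your proposal is correct and is essentially the paper's argument: density via the orthogonal complement, where the inner product with $(s-z)^{-k}$ is identified, through the Cauchy integral formula, as a nonzero multiple of the $(k-1)$-st derivative at the RHP point $-\bar z$, so orthogonality forces all derivatives to vanish there and hence the function is zero. Your entrywise reduction and your use of a single pole are repackagings of the paper's $E_{ij}$ and of its use of all of $\cZ$, which it does not actually need; the only slip is cosmetic, since with the $\tfrac{1}{2\pi}$-normalized inner product the reproducing kernel is $1/(s+\bar w)$ rather than $1/(2\pi(s+\bar w))$, and this affects nothing but constants.
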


The proof of  this lemma  is a slight generalization of a well-known
technique~\cite{rud62}.

\begin{proof}
Consider the set of functions 
\[
	{\cal E} := 
	 \left\{ \frac{1}{(s-z_\ell)^{k}} \, E_{ij}, ~
	 	  % \right\}_{	\begin{array}{l}	\scriptstyle
	 	k\geq 1,   \begin{array}{c} \scs 1\leq i\leq p, \\ \scs  1\leq j\leq q, \end{array} 
			~ 1 \leq \ell \leq n  \right\} 
%									\end{array}}, \ldots, 
%	 \left\{ \frac{E_{ij}}{(s-z_n)^{k}} \right\}_{	\begin{array}{l}	\scriptstyle
%	 								k\geq 1 \\ 		\scriptstyle
%	 								1\leq i\leq p, ~1\leq j\leq q
%									\end{array}},
\]
where $E_{ij}$ are the elementary $p\times q$ matrices with $1$ in the
$i,j$'th position, and zeros elsewhere. The Lemma follows from showing 
that this set spans a dense subspace in $\Htpq$. 
This is accomplished by showing that any element 
$G\in\Htpq$ perpendicular to the span of the above functions must be zero. 
Note that the inner product in $\Htpq$ between $G$ and any of 
the above functions is
									%\marginpar{check this}
\begin{align*}
	\left<  \frac{E_{ij}}{(s-z_l)^{k}} ~ , G \right>  
	& =  \tfrac{1}{2\pi} 
	\int_{\sm \infty}^\infty
		{\sf trace} \!	\left(
		\left( \frac{E_{ij}}{(j\omega-z_l)^k}\right)^*  G(j\omega)
				\right)  \, d\omega
												\\
	& =  
		j~\frac{d^{(k-1)}G_{ij}}{dz^{(k-1)}}(-z^*_l),		
\end{align*}
where the second equality follows from Cauchy's integral formula. 			
Now if $G$ is orthogonal to all of the above basis functions, then each entry of the matrix $G$
and all its derivatives are identically zero at each points of $-\cZ$. 
Since the points of $-\cZ$ are in the RHP,  which is  the region of analyticity of $G$, then each entry of $G$ is 
identically zero. 
\end{proof}

\section{Discussion} 

The lemma above is illustrated in Figure~\ref{modal_io_NOconnection2.fig}. 
%The result is not that
%surprising from a system theoretic point of view, especially when one thinks about 
%approximate controllability and observability of internal modes. 
The construction 
in the proof of Lemma~\ref{main.lemma} will typically require high pole multiplicities. Alternatively 
each pole with multiplicity can be approximated with a high order system without pole 
multiplicity, but with an aggregation of poles in a cluster around that location. In any of 
those approximations, high-order approximants would be required. Thus the phenomenon 
of contrast between internal modes and external resonances explored in this paper 
is fundamentally a high-dimensional phenomenon.

As already mentioned, 
it is rather rare to encounter such systems. However, this is precisely the case for 
spatio-temporal transfer functions representing the linearized Navier-Stokes equations for 
wall-bounded shear flows~\cite{jovanovic2005componentwise}. 
The particular case of channel flows has a very lightly damped pole (compared to all the other poles)
 with non-zero imaginary part ($\omega \neq 0$)
known as Tollmien-Schlichting waves. 
Yet, very large  input-output (spatio-temporal) resonances occur near $\omega =0$
without the presence of any lightly damped poles nearby. From an input-output point of 
view, these resonances are much more physically relevant than the lightly damped poles 
with very little effect on the external dynamic behavior. The flow structures corresponding to 
those external resonances are typically the ones seen in experiments and numerical studies of 
transition and turbulence in the so-called bypass transition regime. 
Thus the lack of correspondence between internal modes and external resonances 
pointed out in this paper is  one explanation for a  
long-standing paradox in hydrodynamic stability studies of wall-bounded shear 
flows~\cite{schmid2007nonmodal}. 

\begin{figure}[t]
	\centering
	\includegraphics[width=.3\textwidth]{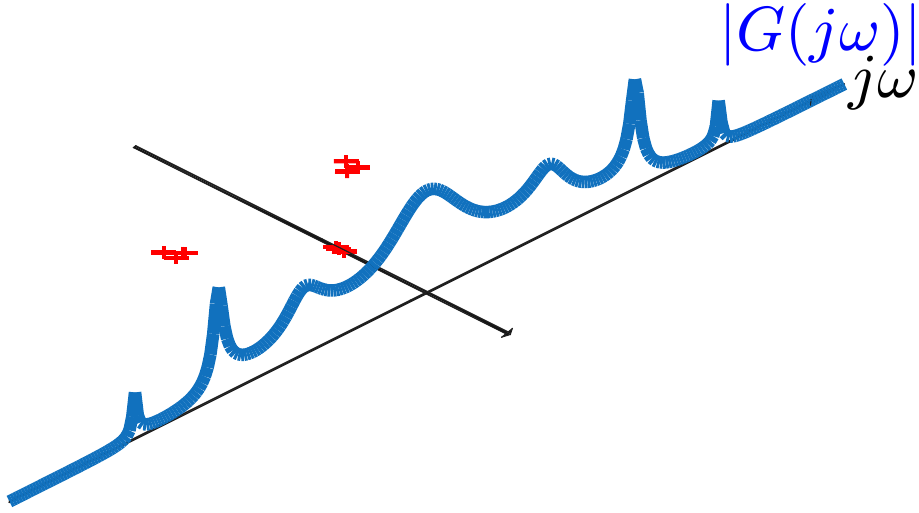} 
	
	\mycaption{Illustration of the main result Lemma~\ref{main.lemma}. 
		Given any set of points $\cZ$ in the LHP (red crosses), any frequency response in 
		$\sf H^2$ can be arbitrarily closely approximated by 
		a  transfer function whose poles are restricted to $\cZ$. 
		} 
  \label{modal_io_NOconnection2.fig}
\end{figure}

\bibliographystyle{ieeetr}
\bibliography{io_res}

@article{schmid2007nonmodal,
	author = {Schmid, Peter J},
	journal = {Annu. Rev. Fluid Mech.},
	number = {1},
	pages = {129--162},
	publisher = {Annual Reviews},
	title = {Nonmodal stability theory},
	volume = {39},
	year = {2007}}

@book{Tongue2001,
	address = {New York},
	author = {Benson H. Tongue},
	edition = {2},
	isbn = {978-0-19-514246-4},
	publisher = {Oxford University Press},
	title = {Principles of Vibration},
	year = {2001}}

@book{ogata2010modern,
	author = {Ogata, Katsuhiko},
	publisher = {Prentice hall},
	title = {Modern control engineering},
	year = {2010}}

@article{jovanovic2005componentwise,
	author = {Jovanovi{\'c}, Mihailo R and Bamieh, Bassam},
	journal = {Journal of Fluid Mechanics},
	pages = {145--183},
	publisher = {Cambridge University Press},
	title = {Componentwise energy amplification in channel flows},
	volume = {534},
	year = {2005}}

@book{rud62,
	address = {New York},
	author = {W. Rudin},
	publisher = {Interscience-Wiley},
	title = {Fourier Analysis on Groups},
	year = 1962}

\end{document}